% Emergence of Four Dimensions in the Causal Set Approach to Discrete Quantum Gravity
% by Stan Gudder

% emerge4ar.tex converted from emerge4.tex.tex
% for arXiv converted from MDPI for Mathematics template 2.5.17
% in emerge4 - The following line should be uncommented if the LaTeX file is uploaded to arXiv.org
%\pdfoutput=1

% 1.16.15 setup and begin for MDPI Mathematics - emerge4.tex
% 1.24.15 finish abstract
% 2.4.15 finish 
% 2.5.15 correct
% 2.7.15 - final correction
% 2.7.15 submit to arXiv - 

%% NOTES:
% Theorem 5.1 has \newline to keep equation together

%
% first entry is journal - here mathematic, previous submission: axioms
%\documentclass[mathematics,article,submit,oneauthors,12pt,a4paper]{mdpi} 

%Style section
\documentclass[12pt,letterpaper]{article}
\usepackage{amssymb}
\usepackage{amsmath}
\usepackage{enumerate}
\usepackage{relsize}

\usepackage{amsfonts,amsthm,stmaryrd}

% Declaration section
\theoremstyle{plain}

\numberwithin{equation}{section}
\newtheorem{thm}{Theorem}[section]
\newtheorem{cor}[thm]{Corollary}

\newenvironment{exam}%[1]%
{\begin{flushleft}\textbf{Example}.\enspace}%
{\end{flushleft}}

\newcommand{\complex}{{\mathbb C}}
\newcommand{\real}{{\mathbb R}}

\newcommand{\pscript}{{\mathcal P}}
\newcommand{\rmang}{\mathrm{ang}}
\newcommand{\rmim}{\mathrm{Im}}
\newcommand{\rmre}{\mathrm{Re}}
\newcommand{\rmtr}{\mathrm{tr}}
\newcommand{\ahat}{\widehat{a}}
\newcommand{\xhat}{\widehat{x}}
\newcommand{\yhat}{\widehat{y}}
\newcommand{\omegahat}{\widehat{\omega}}
\newcommand{\atilde}{\widetilde{a}}
\newcommand{\cbar}{\overline{c}}
\newcommand{\abar}{\overline{a}}
\newcommand{\ctimes}{\mathrel{\mathlarger\cdot}}

\newcommand{\ab}[1]{\left|#1\right|}
\newcommand{\brac}[1]{\left\{#1\right\}}
\newcommand{\paren}[1]{\left(#1\right)}
\newcommand{\sqbrac}[1]{\left[#1\right]}
\newcommand{\floors}[1]{\lfloor#1\rfloor}
\newcommand{\sqparen}[1]{{\left[#1\right)}}
\newcommand{\elbows}[1]{{\left\langle#1\right\rangle}}
\newcommand{\ket}[1]{{\left|#1\right>}}
\newcommand{\bra}[1]{{\left<#1\right|}}

\errorcontextlines=0

%=================================================================

\begin{document}

% Full title of the paper (Capitalized)
\title{EMERGENCE OF FOUR DIMENSIONS\\ IN THE CAUSAL SET APPROACH\\ TO DISCRETE QUANTUM GRAVITY}
% Authors (Add full first names)
\author{S. Gudder\\ Department of Mathematics\\
University of Denver\\ Denver, Colorado 80208, U.S.A.\\
sgudder@du.edu
}
\date{}
\maketitle

\begin{abstract}
One could begin a study like the present one by simply postulating that our universe is four-dimensional. There are ample reasons for doing this. Experience, observation and experiment all point to the fact that we inhabit a four-dimensional universe. Another approach would be to show that four-dimensions arise naturally from a reasonable model of the universe or multiverse. After reviewing the causal set approach to discrete quantum gravity in Section~1, we shall discuss the emergence of four-dimensions in Section~2. We shall see that certain patterns of four arise that suggest the introduction of a 4-dimensional discrete manifold. In the later sections we shall discuss some consequences of this introduced framework. In particular, we will show that quantum amplitudes can be employed to describe a multiverse dynamics. Moreover, a natural unitary operator together with energy, position and momentum operators will be introduced and their properties studied.
\end{abstract}

\smallskip
\noindent\textbf{Keywords}:\enspace Causal sets, discrete quantum gravity, transition amplitudes,

\hskip 4pc four dimensions.
\medskip

\section{Causal Set Approach}  % Section 1

We call a finite poset $(x,<)$ a \textit{causet} and interpret $a<b$ in $x$ to mean that $b$ is in the causal future of $a$
\cite{gud142,hen09,sor03,sur11}. If $x$ and $y$ are causets, we say that $x$ \textit{produces} $y$ (denoted by $x\to y$) if $y$ is obtained from $x$ by adjoining a single maximal element to $x$. If $x\to y$, we call $y$ an \textit{offspring} of $x$. A \textit{labeling} for a causet $x$ of cardinality $\ab{x}$ is a bijection

\begin{equation*}
\ell\colon x\to\brac{1,2,\ldots ,\ab{x}}
\end{equation*}
such that $a,b\in x$ with $a<b$ implies $\ell (a)<\ell (b)$. Two labeled causets $x,y$ are \textit{isomorphic} if there is a \textit{bijection}
$\phi\colon x\to y$ such that $a<b$ in $x$ if and only if $\phi (a)<\phi (b)$ in $y$ and $\ell\sqbrac{\phi (a)}=\ell (a)$ for all $a\in x$. A causet is
\textit{covariant} if it has a unique labeling (up to isomorphisms) and we call a covariant causet a $c$-\textit{causet}
\cite{gud13,gud141,gud142,gud143}. Denote the set of $c$-causets with cardinality $n$ by $\pscript '_n$ and the set of all $c$-causets by
$\pscript '$. It is shown in \cite{gud141} that any $x\in\pscript '$ with $x\ne\emptyset$ has a unique producer in $\pscript '$ and precisely
two offspring in $\pscript '$. It follows that $\ab{\pscript '_n}=2^{n-1}$, $n=1,2,\ldots\,$. Two elements $a,b\in x$ are \textit{comparable} if $a<b$ or $b<a$. The \textit{height} $h(a)$ of $a\in x$ is the cardinality, minus one, of the longest path in $x$ that ends with $a$. It is shown in \cite{gud141} that a causet $x$ is covariant if and only if $a,b\in x$ are comparable whenever $h(a)\ne (b)$.

If $x\in\pscript '$ we call the sets

\begin{equation*}
S_j(x)=\brac{a\in x\colon h(a)=j},\quad j=0,1,2,\ldots
\end{equation*}
\textit{shells} and the sequence of integers $s_i(x)=\ab{S_j(x)}$, $j=0,1,\ldots$, is the \textit{shell sequence}. A $c$-causet is uniquely determined by its shell sequence and we think of $\brac{s_i(x)}$ as describing the ``shape'' or geometry of $x$. The tree $(\pscript ',\shortrightarrow )$ can be thought of as a growth model and an $x\in\pscript '_n$ is a possible universe at step (time) $n$. An instantaneous universe $x\in\pscript '_n$ grows one element at a time in one of two ways. If $x$ has shell sequence $\paren{s_o(x),s_i(x),\ldots ,s_m(x)}$ then $x\to x_0$ or $x\to x_1$ where
$x_0,x_1\in\pscript '_{n+1}$ have shell sequences

\begin{align*}
\paren{s_0(x),s_1(x),\ldots ,s_m(x)+1}\\
\intertext{and}
\paren{s_0(x),s_1(x),\ldots ,s_m(x),1}
\end{align*}
respectively. In this way, we recursively order the $c$-causets in $\pscript '_n$ using the notation $x_{n,j}$, $n=1,2,\ldots$, $j=0,1,\ldots,2^{n-1}-1$ where $n=\ab{x_{n,j}}$. For example, in terms of their shell sequences we have:

\begin{align*}
x_{1,0}&=(1),x_{2,0}=(2),x_{2,1}=(1,1),x_{3,0}=(3),x_{3,1}=(2,1),x_{3,2}=(1,2),x_{3,3}=(1,1,1)\\
x_{4,0}&=(4),x_{4,1}=(3,1),x_{4,2}=(2,2),x_{4,3}=(2,1,1),x_{4,4}=(1,3),x_{4,5}=(1,2,1)\\
x_{4,6}&=(1,1,2),x_{4,7}=(1,1,1,1)
\end{align*}

In this model, we view a $c$-causet as a framework or scaffolding for a possible universe. The vertices of $x$ represent small cells that can be empty or occupied by a particle. The shell sequence for $x$ gives the geometry of the framework \cite{gud142}. Notice that this is a multiverse model in which infiite paths in $(\pscript ',\shortrightarrow )$ represent the histories of ``completed'' evolved universes.

We now describe the evolution of a universe as a quantum sequential growth process. In such a process, the probabilities of competing geometries are determined by quantum amplitudes. These amplitudes provide interferences that are characteristic of quantum systems. A
\textit{transition amplitude} is a map $\atilde\colon\pscript '\times\pscript '\to\complex$ satisfying $\atilde (x,y)=0$ if $x\not\to y$ and
$\sum _{y\in\pscript '}\atilde (x,y)=1$. Since $x_{n,j}$ only has the offspring $x_{n+1,2j}$ and $x_{n+1,2j+1}$ we have that

\begin{equation*}
\sum _{k=0}^1\atilde (x_{n,j},x_{n+1,2j+k})=1
\end{equation*}
for all $n=1,2,\ldots$, $j=0,1,\ldots ,2^{n-1}-1$. We call $\atilde$ a \textit{unitary transition amplitude} (uta) if $\atilde$ also satisfies
$\sum _{y\in\pscript '}\ab{\atilde (x,y)}^2=1$. One might suspect that the restrictions on a uta are so strong that the possibilities are very limited. This would be true if $\atilde$ were real valued. In this case, $\atilde (x,y)$ is 0 or 1. However, in the complex case, the next result, which is proved in \cite{gud143}, shows that there are a continuum of possibilities.

\begin{thm}       % Theorem 1.1
\label{thm11}
Two compex numbers $a,b$ satisfy $a+b=\ab{a}^2+\ab{b}^2=1$ if and only if there exists a $\theta\in\sqparen{0,\pi}$ such that
$a=\cos\theta e^{i\theta}$, $b=-i\sin\theta e^{i\theta}$. Moreover, $\theta$ is unique.
\end{thm}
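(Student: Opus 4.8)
The statement is a biconditional together with a uniqueness assertion, so the plan is to prove the two implications separately and then dispose of uniqueness.

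The ``if'' direction is a direct verification, which I would dispatch first. Given $a = \cos\theta\, e^{i\theta}$ and $b = -i\sin\theta\, e^{i\theta}$, I would compute $a + b = (\cos\theta - i\sin\theta)e^{i\theta} = e^{-i\theta}e^{i\theta} = 1$ and $\ab{a}^2 + \ab{b}^2 = (\cos^2\theta + \sin^2\theta)\ab{e^{i\theta}}^2 = 1$. There is no obstacle here; it rests only on the identity $\cos\theta - i\sin\theta = e^{-i\theta}$ and on $\ab{e^{i\theta}} = 1$.

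For the ``only if'' direction I would eliminate $b$ by writing $b = 1 - a$ and substituting into the modulus condition. Expanding $\ab{1-a}^2 = 1 - 2\,\rmre (a) + \ab{a}^2$ collapses $\ab{a}^2 + \ab{b}^2 = 1$ to the single clean constraint $\ab{a}^2 = \rmre (a)$. Writing $a = x + iy$, this is the circle $\paren{x - \tfrac12}^2 + y^2 = \tfrac14$ of radius $\tfrac12$ centered at $\tfrac12$. I would then parametrize this circle by $x = \tfrac12(1 + \cos 2\theta)$, $y = \tfrac12\sin 2\theta$ with $\theta\in\sqparen{0,\pi}$, and invoke the half-angle identities $\tfrac12(1 + \cos 2\theta) = \cos^2\theta$ and $\tfrac12\sin 2\theta = \sin\theta\cos\theta$ to recognize $a = \cos\theta(\cos\theta + i\sin\theta) = \cos\theta\, e^{i\theta}$.

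Once the form of $a$ is in hand, recovering $b$ is a one-line computation: $b = 1 - \cos^2\theta - i\sin\theta\cos\theta = \sin\theta(\sin\theta - i\cos\theta)$, and since $\sin\theta - i\cos\theta = -ie^{i\theta}$ this yields $b = -i\sin\theta\, e^{i\theta}$. For uniqueness I would observe that $\theta\mapsto(\cos 2\theta,\sin 2\theta)$ is a bijection from $\sqparen{0,\pi}$ onto the unit circle, so the location of $a$ on the circle determines $2\theta$, and hence $\theta$, uniquely. The only step calling for any foresight is the half-angle substitution, which is what turns the geometrically natural circle parametrization into the precise closed form the statement demands; a more naive attack that writes $a = re^{i\theta}$ in standard polar form lands instead on $r = \cos\theta$ and then must fuss over the sign of $\cos\theta$ to recover the full range $\sqparen{0,\pi}$, so I would deliberately avoid that route.
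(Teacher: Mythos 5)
Your proof is correct and complete: the verification of the ``if'' direction, the reduction of the ``only if'' direction via $b=1-a$ to the circle $\ab{a}^2=\rmre(a)$, the half-angle parametrization recovering $a=\cos\theta\,e^{i\theta}$ and $b=-i\sin\theta\,e^{i\theta}$, and the uniqueness argument from the bijectivity of $\theta\mapsto e^{i2\theta}$ on $\sqparen{0,\pi}$ all check out. Note that the paper itself gives no proof of this theorem (it defers to reference [4]), so there is nothing in-text to compare against; your argument is the natural one and correctly handles the half-open interval $\sqparen{0,\pi}$, including the boundary case $\theta=0$ where $a=1$, $b=0$.
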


If $\atilde\colon\pscript '\times\pscript '\to\complex$ is a uta, we call 

\begin{equation*}
c_{n,j}^k=\atilde(x_{n,j},x_{n+1,2j+k}),\quad k=0,1
\end{equation*}
the 2-\textit{dimensional coupling constants} for $\atilde$. It follows from Theorem~\ref{thm11} that there exist $\theta _{n,j}\in\sqparen{0,\pi}$ such that

\begin{equation*}
c_{n,j}^0=\cos\theta _{n,j}e^{i\theta _{n,j}},c_{n,j}^1=-i\sin\theta _{n,j}e^{i\theta _{n,j}}
\end{equation*}
Hence, $c_{n,j}^0+c_{n,j}^1=\ab{c_{n,}^0}^2+\ab{c_{n,j}^1}^2=1$ for all $n=1,2,\ldots$, $j=0,1,\ldots ,2^{n-1}-1$.

A \textit{path} in $\pscript '$ is a sequence $\omega =\omega _1\omega _2\cdots$ where $\omega _i\in\pscript '_i$ and
$\omega _i\to\omega _{i+1}$. Similarly, an $n$-\textit{path} has the form $\omega =\omega _1\omega _2\cdots\omega _n$ where again
$\omega _i\to\omega _{i+1}$. We denote the set of paths by $\Omega '$ and the set of $n$-paths by $\Omega '_n$. Since every $x\in\pscript '_n$ has a unique $n$-path terminating at $x$, we can identify $\pscript '_n$ with $\Omega '_n$ and write $\pscript '_n\approx\Omega '_n$. If $\atilde$ is a uta and $\omega =\omega _1\omega _2\cdots\omega _n\in\Omega '_n$, we define the \textit{amplitude} of $\omega$ to be

\begin{equation*}
a(\omega )=\atilde (\omega _1,\omega _2)\atilde (\omega _2,\omega _3)\cdots\atilde (\omega _{n-1},\omega _n)
\end{equation*}
Moreover, we define the \textit{amplitude} of $x\in\pscript '_n$ to be $a(\omega )$ where $\omega$ is the unique path in $\Omega '_n$ that terminates at $x$. For $A\subseteq\pscript '_n$ we define the \textit{amplitude} of $A$ to be

\begin{equation*}
a(A)=\sum\brac{a(x)\colon x\in A}
\end{equation*}
and the $q$-\textit{measure} of $A$ to be $\mu _n(A)=\ab{a(A)}^2$. We conclude that $\mu _n\colon 2^{\pscript '_n}\to\real ^+$ and it is not hard to verify that $\mu (\pscript '_n)=1$ \cite{hen09,sor94,sor03}.

In general, $\mu _n$ is not additive so it is not a measure. For this reason $\mu _n(A)$ is interpreted as the quantum propensity of $A$ instead of the quantum probability. Although $\mu _n$ is not additive, it satisfies the \textit{grade}-2 \textit{additivity condition} \cite{hen09,sor94,sor03}: if
$A,B,C\in 2^{\pscript '_n}$ are mutually disjoint then

\begin{equation*}
\mu _n(A\cup B\cup C)=\mu _n(A\cup B)+\mu _n(A\cup C)+\mu _n(B\cup C)-\mu _n(A)-\mu _n(B)-\mu _n(C)
\end{equation*}

Because of the lack of additivity we have, in general, that

\begin{equation}         % equation (1.1)
\label{eq11}
\mu _n\paren{\brac{x,y}}\ne\mu _n(x)+\mu _n(y)
\end{equation}
for $x,y\in\pscript '_n$. If \eqref{eq11} holds, we say that $x$ and $y$ \textit{interfere} and otherwise we say that $x$ and $y$ \textit{do not interfere}. It is shown in \cite{gud143} that if $x$ and $y$ have the same producer, then $x$ and $y$ do not interfere.

\section{Patterns of Four} % Section 2

This section shows that various patterns of four occur in $\pscript '_n$. The \textit{height} $h(x)$ of $x\in\pscript '$ is the cardinality minus one of the longest paths in $x$. Equivalently, $h(x)$ is the number of shells minus one in $x$. For example, since $x_{4,5}=(1,2,1)$ we have $h(x_{4,5})=2$ The
\textit{height sequence} of $\pscript '_n$ is the sequence of integers

\begin{equation*}
\paren{h(x_{n,0}),h(x_{n,1}),\ldots ,h(x_{n,2^{n-1}-1})}
\end{equation*}
We now display the height sequences of $\pscript '_n$ for the first few values of $n$.

\begin{align*}
\pscript '_1\colon&(0),\pscript '_2\colon (0,1),\pscript '_3\colon (0,1,1,2),\pscript '_4\colon (0,1,1,2,1,2,2,3)\\
\pscript '_5\colon&(0,1,1,2,1,2,2,3,1,2,2,3,2,3,3,4)\\
\pscript '_6\colon&(0,1,1,2,1,2,2,3,1,2,2,3,2,3,3,4,1,2,2,3,2,3,3,4,2,3,3,4,3,4,4,5)
\end{align*}

Notice the compelling patterns of four that stand out. For example, in $\pscript '_6$ we have $(0,1,1,2)$, $(1,2,2,3)$, $(1,2,2,3)$, $(2,3,3,4)$,
$(1,2,2,3)$, $(2,3,3,4)$, $(2,3,3,4)$, $(3,4,4,5)$. These patterns have the form $r,r+1,r+1,r+2$. One might also say that there are patterns of two and patterns of eight, but these are not as compelling. We now show why this four pattern occurs. If $x\to y$ then $h(y)=h(x)$ or $h(x)+1$. Hence, if
$\pscript '_n$ has height sequence,

\begin{equation*}
(r_1,r_2,\ldots ,r_{2^{n-1}-1})
\end{equation*}
then $\pscript '_{n+1}$ has height sequence

\begin{equation*}
(r_1,r_1+1,r_2,r_2+1,\ldots ,r_{2^{n-1}-1},r_{2^{n-1}-1}+1)
\end{equation*}
Applying this reasoning again shows that $\pscript '_{n+2}$ has height sequence

\begin{align*}
(r_1,r_1+1,r_1+1,r_1+2,&r_2,r_2+1,r_2+1,r_2+2,\ldots ,\\
&r_{2^{n-1}-1},r_{2^{n-1}-1}+1,r_{2^{n-1}-1}+1,r_{2^{n-1}-1}+2)
\end{align*}

As an aside, it is of interest to consider the number of causets $\tau _n(r)$ in $\pscript '_n$ with height $r=0,1,\ldots ,n-1$.
For example, $\tau _5(0)=1$, $\tau _5(1)=4$, $\tau _5(2)=6$, $\tau _5(3)=4$, $\tau _5(4)=1$. This suggests the following result.

\begin{thm}       % Theorem 2.1
\label{thm21}
For $n=1,2,\ldots $, $r=0,1,\ldots ,n-1$ we have that $\tau _n(r)=\binom{n-1}{r}$.
\end{thm}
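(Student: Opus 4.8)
The plan is to argue by induction on $n$, using the two production rules from Section~1 together with Pascal's rule. Recall that a $c$-causet is determined by its shell sequence and that $h(x)$ equals the number of shells of $x$ minus one. The base case $n=1$ is immediate: $\pscript'_1=\brac{(1)}$ contains the single causet of height $0$, so $\tau_1(0)=1=\binom{0}{0}$ and $\tau_1(r)=0=\binom{0}{r}$ for $r\geq1$.

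For the inductive step, suppose $\tau_n(r)=\binom{n-1}{r}$ for every $r$. I would track how height behaves under production. If $x\in\pscript'_n$ has shell sequence $(s_0,\ldots,s_m)$, then its two offspring $x_0,x_1\in\pscript'_{n+1}$ have shell sequences $(s_0,\ldots,s_m+1)$ and $(s_0,\ldots,s_m,1)$, respectively. The first has the same number of shells as $x$, so $h(x_0)=h(x)$, while the second has one additional shell, so $h(x_1)=h(x)+1$. Since every element of $\pscript'_{n+1}$ has a unique producer in $\pscript'_n$ and exactly these two offspring, each $y\in\pscript'_{n+1}$ is counted exactly once as either a ``first-type'' offspring (height unchanged from its parent) or a ``second-type'' offspring (height one more than its parent).

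Sorting the causets in $\pscript'_{n+1}$ of height $r$ according to their parent then yields $\tau_{n+1}(r)=\tau_n(r)+\tau_n(r-1)$: the first-type offspring of height $r$ come precisely from the $\tau_n(r)$ parents of height $r$, while the second-type offspring of height $r$ come precisely from the $\tau_n(r-1)$ parents of height $r-1$. Applying the inductive hypothesis and Pascal's rule,
\[
\tau_{n+1}(r)=\binom{n-1}{r}+\binom{n-1}{r-1}=\binom{n}{r},
\]
which completes the induction.

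I expect no serious obstacle: the entire content is the observation that production alters the height by exactly $0$ or $1$ according to the offspring type, after which the count is forced to satisfy the Pascal recurrence. The only points needing a word of care are the boundary values, where the conventions $\binom{n-1}{-1}=0$ and $\binom{n-1}{n}=0$ correctly account for the absence of parents of height $-1$ (when $r=0$) and of height $n$ (when $r=n$, recalling that heights in $\pscript'_n$ do not exceed $n-1$).
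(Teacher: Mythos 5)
Your proof is correct and follows essentially the same route as the paper's: induction on $n$, observing that the two offspring of a causet of height $r$ have heights $r$ and $r+1$, which forces the Pascal recurrence $\tau_{n+1}(r)=\tau_n(r)+\tau_n(r-1)$. Your version is if anything slightly more careful, since you explicitly invoke the unique-producer property to justify that each causet in $\pscript'_{n+1}$ is counted exactly once and you address the boundary conventions for $\binom{n-1}{-1}$ and $\binom{n-1}{n}$.
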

\begin{proof}
We employ induction on $n$. The result $\tau _1(0)=1$ is clearly true. Assume the result holds for $\pscript '_n$ and consider causets in
$\pscript '_{n+1}$ with height $r$. Now any $x\in\pscript '_n$ with $h(x)=r$ produces a $y\in\pscript '_n$ with $h(y)=r$. Also any $x\in\pscript '_n$
with $h(x)=r-1$ (we can assume $r\ne 0$) produces a $y\in\pscript '_n$ with $h(y)=r$. By the induction hypothesis

\begin{equation*}
\tau _{n+1}(r)=\tau _n(r)+\tau _n(r-1)=\binom{n-1}{r}+\binom{n-1}{r-1}
\end{equation*}
But a well-known combinatorial identity gives

\begin{equation*}
\binom{n-1}{r}+\binom{n-1}{r-1}=\binom{n}{r}\qedhere
\end{equation*}
\end{proof}

\begin{cor}       % Corollary 2.2
\label{cor22}
The maximum of $\tau _n(r)$ occurs when $r=\floors{\frac{n-1}{2}}$ where $\floors{\cdot}$ is the floor function.
\end{cor}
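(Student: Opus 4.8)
The plan is to reduce everything to a statement about binomial coefficients and then settle it by comparing consecutive terms. By Theorem~\ref{thm21} we have $\tau_n(r)=\binom{n-1}{r}$, so the corollary amounts to showing that the finite sequence $r\mapsto\binom{n-1}{r}$, $r=0,1,\ldots,n-1$, attains its maximum at $r=\floors{\frac{n-1}{2}}$. This is the classical unimodality of a row of Pascal's triangle, and the cleanest route is to track the ratio of successive entries.

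First I would compute
\begin{equation*}
\frac{\binom{n-1}{r+1}}{\binom{n-1}{r}}=\frac{n-1-r}{r+1},
\end{equation*}
which is valid for $0\le r\le n-2$. This ratio exceeds $1$ exactly when $n-1-r>r+1$, i.e.\ when $r<\frac{n-2}{2}$; it equals $1$ when $r=\frac{n-2}{2}$; and it is less than $1$ when $r>\frac{n-2}{2}$. Consequently $\binom{n-1}{r}$ strictly increases as $r$ runs up to the crossover and then decreases, so the sequence is unimodal and its maximum sits at the crossover value.

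It then remains to identify that crossover value, and here the only care needed is the parity of $n$. If $n$ is odd, say $n=2k+1$, the ratio is $>1$ for $r\le k-1$ and $<1$ for $r\ge k$, so the unique maximum is at $r=k=\floors{\frac{n-1}{2}}$. If $n$ is even, say $n=2k$, the ratio equals $1$ at $r=k-1$, so $\binom{n-1}{k-1}=\binom{n-1}{k}$ and there are two equal maximal entries, at $r=k-1$ and $r=k$; the smaller of these is $k-1=\floors{\frac{n-1}{2}}$. In either case the maximum is attained at $r=\floors{\frac{n-1}{2}}$, as claimed.

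I anticipate no serious obstacle: the argument is routine once Theorem~\ref{thm21} is in hand. The one point worth flagging is the even case, where the maximum is not unique; the corollary should be read as asserting that $r=\floors{\frac{n-1}{2}}$ is \emph{a} maximizer, the symmetry $\binom{n-1}{r}=\binom{n-1}{n-1-r}$ exhibiting $r=\lceil\frac{n-1}{2}\rceil$ as the companion maximizer.
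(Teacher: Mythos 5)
Your proof is correct. The paper states this corollary without proof, treating it as an immediate consequence of Theorem~\ref{thm21} together with the classical unimodality of the binomial coefficients $\binom{n-1}{r}$; your ratio argument $\binom{n-1}{r+1}/\binom{n-1}{r}=(n-1-r)/(r+1)$ is exactly the standard way to establish that fact, and your parity analysis correctly locates the peak at $r=\floors{\frac{n-1}{2}}$ (with the tie at $r=\lceil\frac{n-1}{2}\rceil$ when $n$ is even, a point the corollary's wording glosses over and which you rightly flag).
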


We now consider a second example of four patterns. If $x\in\pscript '$ has shell sequence $(s_0(x),s_1(x),\ldots ,s_m(x))$, $s_m(x)>0$ we define
$w(x)=s_m(x)$. Thus, $w(x)$ is the cardinality of the highest shell of $x$. The \textit{width sequence} of $\pscript '_n$ is the sequence of integers

\begin{equation*}
\paren{w(x_{n,0}),w(x_{n,1}),\ldots ,w(x_{n,2^{n-1}-1})}
\end{equation*}
We next display the width sequences of $\pscript '_n$ for the first few values of $n$.

\begin{align*}
\pscript '_1\colon&(1),\pscript '_2\colon (2,1),\pscript '_3\colon (3,1,2,1),\pscript '_4\colon (4,1,2,1,3,1,2,1)\\
\pscript '_5\colon&(5,1,2,1,3,1,2,1,4,1,2,1,3,1,2,1)\\
\pscript '_6\colon&(6,1,2,1,3,1,2,1,4,1,2,1,3,1,2,1,5,1,2,1,3,1,2,1,4,1,2,1,3,1,2,1)
\end{align*}
As before, the patterns of four $(r,1,2,1)$ stand out clearly.

Our last example of a four pattern stems from quantum mechanics. What we say here applies to any $\pscript '_n$ for $n\ge 3$ but for simplicity let us consider $\pscript '_3=\brac{x_{3,0},x_{3,1},x_{3,2},x_{3,3}}$. Since $x_{3,0}$ and $x_{3,1}$ have the same producer, as mentioned in Section~1,
$x_{3,0}$ and $x_{3,1}$ do not interfere. Thus, the pair $(x_{3,0},x_{3,1})$ act classically with no apparent quantum effects. The same applies for the pair $(x_{3,2},x_{3,3})$. Calling such pairs \textit{siblings}, if we want to include siblings and quantum effects the smallest applicable set is the 4-tuple $(x_{3,0},x_{3,1},x_{3,2},x_{3,3})$. In general, the 4-tuple would have the form

\begin{equation*}
\paren{x_{n,4j},x_{n,4j+1},x_{n,4j+2},x_{n,4j+3}}
\end{equation*}
There are precisely \textit{four} interfering pairs in such a 4-tuple. These are $(x_{n,4j},x_{n,4j+2})$, $(x_{n,4j},x_{n,4j+3})$,
$(x_{n,4j+1},x_{n,4j+2})$ and $(x_{n,4j+1},x_{n,4j+3})$.

\begin{exam} % Example 1
We now show that, in general, $x_{3,0}$ and $x_{3,2}$ interfere. In terms of the coupling constants we have that
\begin{equation}         % equation (2.1)
\label{eq21}
\mu _3\paren{\brac{x_{3,0},x_{3,2}}}=\ab{a(x_{3,0})+a(x_{3,2})}^2=\ab{c_{1,0}^0c_{2,0}^0+c_{1,0}^1c_{2,1}^0}^2
\end{equation}
On the other hand

\begin{equation}         % equation (2.2)
\label{eq22}
\mu _3(x_{3,0})+\mu _3(x_{3,2})=\ab{a(x_{3,0})}^2+\ab{a(x_{3,2})}^2=\ab{c_{1,0}^0}^2\ab{c_{2,0}^0}^2+\ab{c_{1,0}^1}^2\ab{c_{2,1}^0}^2
\end{equation}
But \eqref{eq21} and \eqref{eq22} do not agree unless

\begin{equation*}
\rmre (c_{1,0}^0c_{2,0}^0\cbar _{1,0}^1\cbar _{2,1}^0)=0
\end{equation*}
so $x_{3,0}$ and $x_{3,2}$ interfere, in general. The same reasoning also shows that $(x_{3,0},x_{3,3})$, $(x_{3,1},x_{3,2})$, $(x_{3,1},x_{3,3})$ are interfering pairs, in general.
\end{exam}

This argument also shows that $x_{3,0}$ and $x_{3,1}$ do not interfere. In this case \eqref{eq21} becomes

\begin{equation*}
\mu _3\paren{\brac{x_{3,0},x_{3,1}}}=\ab{c_{1,0}^0c_{2,0}^0+c_{1,0}^0c_{2,0}^1}^2
  =\ab{c_{1,0}^0}^2\ab{c_{2,0}^0+c_{2,0}^1}^2=\ab{c_{1,0}^0}^2
\end{equation*}
and \eqref{eq22} becomes

\begin{equation*}
\mu _3(x_{3,0})+\mu _3(x_{3,1})=\ab{c_{1,0}^0}^2\paren{\ab{c_{2,0}^0}^2+\ab{c_{2,0}^1}^2}=\ab{c_{1,0}^0}^2
\end{equation*}
Hence, \eqref{eq21} and \eqref{eq22} agree.

\section{Four Dimensional Discrete Manifold} % Section 3

The previous section suggests that important patterns occur for $c$-causet 4-tuples of the form $(x_{n,4j},x_{n,4j+1},x_{n,4j+2},x_{n,4j+3})$. We can write such 4-tuples as $x_{n,4j+k}$, $k=0,1,2,3$. This indicates that instead of considering the full set of $c$-causets $\pscript '$ we should concentrate on the set

\begin{equation*}
\pscript =\cup\brac{\pscript '_n\colon n\hbox{ odd}}
\end{equation*}
of odd cardinality $c$-causets. We then define $\pscript _n=\pscript '_{2n-1}$ to be the collection of $c$-causets with cardinality $2n-1$,
$n=1,2,\ldots$, so that $\pscript =\cup\pscript _n$. We now have

\begin{equation*}
\ab{\pscript _n}=\ab{\pscript '_{2n-1}}=2^{2n-2}=4^{n-1}
\end{equation*}
and as before we order the $c$-causets in $\pscript _n$ as

\begin{equation*}
\pscript _n=\brac{x_{n,0},x_{n,1},\ldots ,x_{n,4^{n-1}-1}}
\end{equation*}
In this case, each $x\in\pscript$ except $x_{1,0}$ has a unique producer and each $x\in\pscript$ has four offspring. In particular
$x_{n,j}\to x_{n+1,4j+k}$, $k=0,1,2,3$, so $(\pscript ,\shortrightarrow )$ becomes a tree that we interpret as a sequential growth process. The main difference is that $x\to y$ if $y$ is obtained by first adjoining a maximal element $a$ to $x$ and then adjoining a second maximal element $b$ to
$x\cup\brac{a}$ so that $y=x\cup\brac{a,b}$.

In this framework, $\pscript$ has the structure of a discrete 4-manifold. A \textit{tangent vector} at $x\in\pscript$ is a pair $(x,y)$ where
$x\in\pscript _n$, $y\in\pscript _{n+1}$ and $x\to y$. Since every $x\in\pscript$ has four offspring, there are four tagent vectors at $x$. We denote the tangent vectors at $x_{n,j}$ by $d_{n,j}^k$, $k=0,1,2,3$ where

\begin{equation*}
d_{n,j}^k=(x_{n,j},x_{n+1,4j+k})
\end{equation*}
We say that two tangent vectors are \textit{incident} if they have the forms $(x,y),(y,z)$. As before, an $n$-\textit{path} in $\pscript$ is a sequence
$\omega =\omega _1\omega _2\cdots\omega _n$ where $\omega _i\in\pscript _i$ and $\omega _i\to\omega _{i+1}$. We denote the set of $n$-paths by $\Omega _n$. We can consider an $n$-path as a sequence of tangent vectors 

\begin{equation*}
\omega =d_{1,0}^{k_1}d_{2,j_2}^{k_2}\cdots d_{n-1,j_{n-1}}^{k_{n-1}}
\end{equation*}
where each tangent vector is incident to the next.

A \textit{transition amplitude} $\atilde\colon\pscript\times\pscript\to\complex$ and a \textit{unitary transition amplitude} (uta) are defined as before. Moreover, we call

\begin{equation*}
c_{n,j}^k=\atilde (x_{n,j},x_{n+1,4j+k}),\quad k=0,1,2,3
\end{equation*}
the \textit{coupling constants} for $\atilde$. If $\atilde$ is a uta, we have

\begin{equation}         % equation (3.1)
\label{eq31}
\sum _{k=0}^3c_{n,j}^k=\sum _{k=0}^3\ab{c_{n,j}^k}=1
\end{equation}
for $n=1,2,\ldots$, $j=0,1,\ldots ,4^{n-1}-1$. As in Section~2, we can identify $\pscript _n$ with $\Omega _n$ and write
$\pscript _n\approx\Omega _n$. If $\atilde$ is a uta and $\omega =\omega _1\omega _2\cdots\omega _n\in\Omega _n$ we define the \textit{amplitude} of $\omega$ to be

\begin{equation*}
a(\omega )=\atilde (\omega _1,\omega _2)\atilde (\omega _2,\omega _3)\cdots\atilde (\omega _{n-1},\omega _n)
\end{equation*}
Also, we define the \textit{amplitude} of $x\in\pscript$ to be $a(\omega )$ where $\omega$ is the unique path in $\Omega _n$ that terminates at $x$.

Let $H_n$ be the Hilbert space

\begin{equation*}
H_n=L_2(\Omega _n)=L_2(\pscript _n)=\brac{f\colon\pscript _n\to\complex}
\end{equation*}
with the standard inner product

\begin{equation*}
\elbows{f,g}=\sum _{x\in\pscript _n}\overline{f(x)}g(x)
\end{equation*}
Let $\xhat _{n,j}$ be the unit vector in $H_n$ given by the characteristic function $\chi _{x_{n,j}}$. Then $\dim H_n=4^{n-1}$ and
$\brac{\xhat _{n,}\colon j=0,1,\ldots ,4^{n-1}-1}$ forms an orthonormal basis for $H_n$. For the remainder of this section $\atilde$ is a uta with corresponding coupling constants $c_{n,j}^k$. We now describe the quantum dynamics generated by $\atilde$. Define the operators
$U_n\colon H_n\to H_{n+1}$ by

\begin{equation*}
U_n\xhat _{n,j}=\sum _{k=0}^3c_{n,j}^k\xhat _{n+1,4j+k}
\end{equation*}
and extend $U_n$ to $H_n$ by linearity. The next few theorems generalize results in \cite{gud143}.

\begin{thm}       % Theorem 3.1
\label{thm31}
{\rm (i)}\enspace The adjoint of $U_n$ is given by $U_n^*\colon H_{n+1}\to H_n$, where

\begin{equation}         % equation (3.2)
\label{eq32}
U_n^*\xhat _{n+1,4j+k}=\cbar _{n,j}^k\xhat _{n,j},\quad k=0,1,2,3
\end{equation}
{\rm (ii)}\enspace $U_n$ is a partial isometry with $U_n^*U_n=I_n$ and

\begin{equation}         % equation (3.3)
\label{eq33}
U_nU_n^*=\sum _{j=0}^{4^{n-1}-1}\ket{\sum _{k=0}^3c_{n,j}^k\xhat _{n+1,4j+k}}\bra{\sum _{k=0}^3c_{n,j}^k\xhat _{n+1,4j+k}}
\end{equation}
\end{thm}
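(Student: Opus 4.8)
The plan is to verify the adjoint formula directly from the defining inner product relation $\elbows{U_n^*\eta,\xi}=\elbows{\eta,U_n\xi}$, and then to read off the partial isometry statement from that formula. First I would compute, for the orthonormal basis vectors, the quantity $\elbows{U_n\xhat_{n,j},\xhat_{n+1,m}}$ for an arbitrary index $m\in\brac{0,1,\ldots,4^n-1}$. Using the definition $U_n\xhat_{n,j}=\sum_{k=0}^3c_{n,j}^k\xhat_{n+1,4j+k}$ and orthonormality of the $\xhat_{n+1,m}$, this inner product is $c_{n,j}^k$ when $m=4j+k$ for some $k\in\brac{0,1,2,3}$, and $0$ otherwise. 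Since the adjoint is characterized by $\elbows{U_n^*\xhat_{n+1,m},\xhat_{n,j}}=\overline{\elbows{U_n\xhat_{n,j},\xhat_{n+1,m}}}$, writing $m=4j+k$ gives $\elbows{U_n^*\xhat_{n+1,4j+k},\xhat_{n,j}}=\cbar_{n,j}^k$, and expanding in the orthonormal basis $\brac{\xhat_{n,j}}$ of $H_n$ yields \eqref{eq32}. Here I would note that each index $m$ determines $j$ and $k$ uniquely via the quotient and remainder of division by $4$, so no index is double-counted.

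For part (ii) I would first compute $U_n^*U_n\xhat_{n,j}$ by composing the two formulas: apply $U_n$ to get $\sum_{k=0}^3c_{n,j}^k\xhat_{n+1,4j+k}$, then apply $U_n^*$ term by term using \eqref{eq32}, which sends each $\xhat_{n+1,4j+k}$ to $\cbar_{n,j}^k\xhat_{n,j}$. This produces $\paren{\sum_{k=0}^3\ab{c_{n,j}^k}^2}\xhat_{n,j}$, and the uta normalization \eqref{eq31} forces the coefficient to equal $1$, so $U_n^*U_n\xhat_{n,j}=\xhat_{n,j}$ for every $j$; by linearity $U_n^*U_n=I_n$, which is exactly the condition that $U_n$ be a partial isometry with initial space all of $H_n$. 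The formula \eqref{eq33} for $U_nU_n^*$ then follows by the same bookkeeping: since $U_n\xhat_{n,j}$ is the ket appearing in the sum, and $U_n^*$ acts as the corresponding bra against the basis of $H_{n+1}$, one checks that $U_nU_n^*$ agrees with the stated rank-one sum on each basis vector $\xhat_{n+1,4j+k}$.

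A subtle point worth flagging is that the normalization used in \eqref{eq31} is $\sum_{k=0}^3\ab{c_{n,j}^k}=1$ rather than $\sum_{k=0}^3\ab{c_{n,j}^k}^2=1$, yet the computation of $U_n^*U_n$ produces the \emph{squared} moduli. For the partial isometry claim to hold one genuinely needs $\sum_{k=0}^3\ab{c_{n,j}^k}^2=1$, which is the defining uta condition $\sum_{y}\ab{\atilde(x,y)}^2=1$ restated in the present four-offspring setting; I would therefore invoke that defining condition directly rather than the possibly mistyped display \eqref{eq31}, treating the squared-modulus normalization as the operative hypothesis.

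The bulk of the argument is routine orthonormal-basis manipulation, so I do not expect a genuine obstacle. The one place demanding care is the index arithmetic tying the single label $m$ on $H_{n+1}$ to the pair $(j,k)$ through $m=4j+k$: one must confirm that as $j$ ranges over $\brac{0,\ldots,4^{n-1}-1}$ and $k$ over $\brac{0,1,2,3}$ the products $4j+k$ exhaust $\brac{0,\ldots,4^n-1}$ bijectively, so that the rank-one terms in \eqref{eq33} partition the basis of $H_{n+1}$ into disjoint blocks of four and the adjoint formula \eqref{eq32} is unambiguous. Once that correspondence is pinned down, both parts reduce to short computations.
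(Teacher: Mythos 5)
Your proposal is correct and follows essentially the same route as the paper: the adjoint formula is verified against the orthonormal bases, $U_n^*U_n=I_n$ follows from the uta normalization $\sum_k\ab{c_{n,j}^k}^2=1$ (and you are right that the exponent is missing in the paper's display \eqref{eq31}, which is a typo for the defining condition $\sum_y\ab{\atilde(x,y)}^2=1$), and \eqref{eq33} is obtained by identifying $U_nU_n^*$ with the projection onto the range of $U_n$. The only cosmetic difference is that you check \eqref{eq33} directly on the basis vectors $\xhat_{n+1,4j+k}$ rather than citing the general fact that $U_nU_n^*$ is the range projection of a partial isometry; both are fine.
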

\begin{proof}
(i)\enspace To show that \eqref{eq32} holds we have that

\begin{align*}
\elbows{U_n^*\xhat _{n+1,4j'+k'},\xhat _{n,j}}&=\elbows{\xhat _{n+1,4j'+k'},U_nx_{n,j}}\\
  &=\elbows{\xhat _{n+1,4j'+k'},\sum _{k=0}^3c_{n,j}^k\xhat _{n+1,4j+k}}\\
  &=c_{n,j'}^{k'},\delta _{j,j'}=\elbows{\cbar _{n,j'}^{k'}\xhat _{n,j'},\xhat _{n,j}}
\end{align*}
(ii)\enspace To show that \eqref{eq32} holds we have that

\begin{equation*}
U_n^*U_n\xhat _{n,j}=\sum _{k=0}^3c_{n,j}^kU_n^*\xhat _{n+1,4j+k}=\sum _{k=0}^3\ab{c_{n,j}^k}^2\xhat _{n,j}=\xhat _{n,j}
\end{equation*}
Since $\brac{\xhat _{n,j}\colon j=0,1,\ldots ,4^{n-1}-1}$ forms an orthonormal basis for $H_n$, the result follows. Equation \eqref{eq33} holds because it is well-known that $U_nU_n^*$ is the projection onto the range of $U_n$.
\end{proof}

It follows from Theorem~\ref{thm31} that the dynamics $U_n\colon H_n\to H_{n+1}$ for a uta $\atilde$ is an isometric operator. As usual a \textit{state} on $H_n$ is a positive operator $\rho$ on $H_n$ with $\rmtr (\rho )=1$. A \textit{stochastic state} on $H_n$ is a state $\rho$ that satisfies
 $\elbows{\rho 1_n,1_n}=1$ where $1_n=\chi _{\pscript _n}$; that is $1_n(x)=1$ for all $x\in\pscript _n$. Notice that $U_n^*1_{n+1}=1_n$.

\begin{cor}       % Corollary 3.2
\label{cor32}
{\rm (i)}\enspace If $\rho$ is a state on $H_n$, then $U_n\rho U_n^*$ is a state on $H_{n+1}$
{\rm (ii)}\enspace If $\rho$ is a stochastic state on $H_n$, then $U_n\rho U_n^*$ is a stochastic state on $H_{n+1}$.
\end{cor}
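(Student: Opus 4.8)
The plan is to verify the two defining properties of a (stochastic) state directly, leaning on the structural facts recorded in Theorem~\ref{thm31}(ii)---in particular $U_n^*U_n=I_n$---together with the identity $U_n^*1_{n+1}=1_n$ noted just before the corollary. Both parts reduce to formal manipulations once these identities are in hand.

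For part (i) I would first check positivity. Since $U_n\rho U_n^*$ is manifestly self-adjoint (a conjugate of the self-adjoint $\rho$), it suffices to show $\elbows{U_n\rho U_n^*\psi,\psi}\ge 0$ for every $\psi\in H_{n+1}$. Applying the defining adjoint relation $\elbows{U_nh,\psi}=\elbows{h,U_n^*\psi}$ with $h=\rho U_n^*\psi$ gives $\elbows{U_n\rho U_n^*\psi,\psi}=\elbows{\rho U_n^*\psi,U_n^*\psi}\ge 0$, the inequality being exactly positivity of $\rho$ evaluated on the vector $U_n^*\psi$. For the trace condition I would invoke cyclic invariance of the trace together with $U_n^*U_n=I_n$, namely $\rmtr\paren{U_n\rho U_n^*}=\rmtr\paren{U_n^*U_n\rho}=\rmtr\paren{I_n\rho}=\rmtr(\rho)=1$. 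Hence $U_n\rho U_n^*$ is a state on $H_{n+1}$.

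For part (ii), $U_n\rho U_n^*$ is already a state by (i), so only the stochastic condition $\elbows{\paren{U_n\rho U_n^*}1_{n+1},1_{n+1}}=1$ remains to be checked. Here I would again shift $U_n$ across the inner product and substitute $U_n^*1_{n+1}=1_n$, obtaining $\elbows{U_n\rho U_n^*1_{n+1},1_{n+1}}=\elbows{\rho U_n^*1_{n+1},U_n^*1_{n+1}}=\elbows{\rho 1_n,1_n}=1$, the final equality being precisely the hypothesis that $\rho$ is stochastic on $H_n$.

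I do not anticipate a genuine obstacle; the argument is entirely formal. The only points needing a little care are the bookkeeping of the adjoint relation in the chosen inner-product convention (linear in the second slot) and the legitimacy of the cyclic trace manipulation, which is immediate since every $H_n$ is finite-dimensional with $\dim H_n=4^{n-1}$. Conceptually, the isometry property $U_n^*U_n=I_n$ is exactly what preserves the trace normalization, while the compatibility $U_n^*1_{n+1}=1_n$ is exactly what propagates the stochastic normalization forward in $n$.
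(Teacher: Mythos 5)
Your proof is correct and is exactly the argument the paper intends: the corollary is stated without proof, but the paper records $U_n^*U_n=I_n$ in Theorem~\ref{thm31}(ii) and the identity $U_n^*1_{n+1}=1_n$ immediately before the statement precisely so that positivity, the trace normalization, and the stochastic normalization transfer as you compute. Nothing further is needed.
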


Corollary~\ref{cor32} shows that $\rho\to U_n\rho U_n^*$ gives a quantum dynamics for states. We now show that a natural stochastic state is generated by a uta $\atilde$. Since

\begin{equation*}
\elbows{\xhat _{n+1,4j+k},U_n\xhat _{n,j}}=c_{n,j}^k=\atilde (x_{n,j},x_{n+1,4j+k})
\end{equation*}
we have for any $\omega =\omega _1\omega _2\cdots\omega _n\in\Omega _n$ that

\begin{equation*}
a(\omega )
  =\elbows{\omegahat _2,U_1\omegahat _1}\elbows{\omegahat _3,U_2\omegahat _2}\cdots\elbows{\omegahat _n,U_{n-1}\omegahat _{n-1}}
\end{equation*}
For $\omega\in\Omega _n$, let $\omegahat =\chi _{\brac{\omega}}\in H_n$. Define the operator $\rho _n$ on $H_n$ by
$\elbows{\omegahat ,\rho _n\omegahat '}=a(\omega )\overline{a(\omega ')}$. Equivalently, we have that
$\elbows{\xhat ,\rho _n\yhat}=a(x)\overline{a(y)}$ for every $x,y\in\pscript _n$. A straightforward generalization of Theorem~2.4 \cite{gud143} shows that $\rho _n$ is a stochastic state on $H_n$. We call $\rho _n$ the \textit{amplitude state} corresponding to $\atilde$.

We have seen that $x_{n,j}\in\pscript _n$ produces four offspring $x_{n+1,4j+k}\in\pscript _{n+1}$, $k=0,1,2,3$. We call the set

\begin{equation*}
(x_{n,j}\shortrightarrow )=\brac{x_{n+1,4j+k}\colon k=0,1,2,3}\subseteq\pscript _{n+1}
\end{equation*}
the \textit{one-step causal future} of $x_{n,j}$. For simplicity we write $(x\shortrightarrow )$ for the one-step causal future of $x\in\pscript _n$ and we use the notation $(x\shortrightarrow )^\wedge =\chi _{(x\shortrightarrow )}$. A straightforward generalization of Theorem~2.5 \cite{gud143} shows that the amplitude state sequence $\rho _n$ is \textit{consistent} in the sense that

\begin{equation*}
\elbows{(x\shortrightarrow )^\wedge ,\rho _{n+1}(y\shortrightarrow )^\wedge}=\elbows{\xhat ,\rho _n\yhat}
\end{equation*}
for every $x,y\in\pscript _n$. Consistency is important because it states that the probabilities and propensities given by the dynamics $\rho _n$ are conserved in time.

A vector $v\in H_n$ is a \textit{stochastic state vector} if $\|v\|=\elbows{v,1_n}=1$. We call the vector

\begin{equation}         % equation (3.4)
\label{eq34}
\ahat _n=\sum _{j=0}^{4^{n-1}-1}a(x_{n,j})\xhat _{n,j}\in H_n
\end{equation}
an \textit{amplitude vector}. Of course, $\ahat _n$ is a stochastic state vector \cite{gud143}. It is easy to check that
$\rho _n=\ket{\ahat _n}\bra{\ahat _n}$ so $\rho _n$ is a pure state. The next result shows that $\ahat _n$ has the expected properties.

\begin{thm}       % Theorem 3.3
\label{thm33}
{\rm (i)}\enspace If $v\in H_n$ is a stochastic state vector then so is $U_nv\in H_{n+1}$.
{\rm (ii)}\enspace $U_n\ahat _n=\ahat _{n+1}$.
{\rm (iii)}\enspace $U_n^*\ahat _{n+1}=\ahat _n$.
\end{thm}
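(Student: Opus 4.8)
The plan is to reduce all three parts to structural facts already established: $U_n$ is an isometry with $U_n^*U_n=I_n$ (Theorem~\ref{thm31}(ii)), the adjoint formula \eqref{eq32}, and the relation $U_n^*1_{n+1}=1_n$ recorded just before Corollary~\ref{cor32}. The one genuinely new ingredient, which drives part~(ii), is the multiplicative factorization of the amplitude along a path.

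For part~(i) I would verify both defining conditions of a stochastic state vector separately. The norm is immediate from the isometry: $\|U_nv\|^2=\elbows{U_nv,U_nv}=\elbows{v,U_n^*U_nv}=\|v\|^2=1$. For the second condition I would pass $U_n$ across the inner product by its adjoint, $\elbows{U_nv,1_{n+1}}=\elbows{v,U_n^*1_{n+1}}=\elbows{v,1_n}=1$, the middle equality being exactly $U_n^*1_{n+1}=1_n$. That identity itself drops out of applying \eqref{eq32} termwise to $1_{n+1}=\sum_{j,k}\xhat_{n+1,4j+k}$ together with $\sum_{k=0}^3\cbar_{n,j}^k=\overline{\sum_{k=0}^3 c_{n,j}^k}=1$ from \eqref{eq31}.

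Part~(ii) is the heart of the matter and rests on one observation: the unique path terminating at $x_{n+1,4j+k}$ is the path terminating at $x_{n,j}$ extended by the single step $x_{n,j}\to x_{n+1,4j+k}$, so since the amplitude of a path is the product of its transition amplitudes we get $a(x_{n+1,4j+k})=a(x_{n,j})c_{n,j}^k$. Granting this, I would expand $U_n\ahat_n=\sum_j a(x_{n,j})U_n\xhat_{n,j}=\sum_j\sum_{k=0}^3 a(x_{n,j})c_{n,j}^k\xhat_{n+1,4j+k}$, substitute the factorization to rewrite each coefficient $a(x_{n,j})c_{n,j}^k$ as $a(x_{n+1,4j+k})$, and reindex the double sum over $(j,k)$ as a single sum over $m=4j+k$ running through $0,1,\ldots,4^n-1$; this is precisely \eqref{eq34} at level $n+1$, namely $\ahat_{n+1}$.

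Part~(iii) then costs nothing further: applying $U_n^*$ to part~(ii) and using $U_n^*U_n=I_n$ gives $U_n^*\ahat_{n+1}=U_n^*U_n\ahat_n=\ahat_n$. (A direct check via \eqref{eq32} also works, producing the factor $\sum_{k=0}^3 c_{n,j}^k\cbar_{n,j}^k=\sum_{k=0}^3\ab{c_{n,j}^k}^2=1$ from the uta condition, but routing through the isometry is cleaner.) The only step that genuinely needs justification---and where I expect a reader to want detail---is the amplitude factorization in part~(ii); once it is in place, everything else is a direct consequence of the isometry and adjoint relations already proved.
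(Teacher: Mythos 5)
Your proposal is correct and follows essentially the same route as the paper: part (i) from the isometry together with $U_n^*1_{n+1}=1_n$, part (ii) from the amplitude factorization $a(x_{n+1,4j+k})=a(x_{n,j})c_{n,j}^k$ applied to the expansion of $U_n\ahat_n$, and part (iii) by applying $U_n^*$ to (ii) and using $U_n^*U_n=I_n$. The only difference is that you supply slightly more detail (e.g., deriving $U_n^*1_{n+1}=1_n$ from \eqref{eq32}) where the paper simply cites these facts.
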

\begin{proof}
(i)\enspace This follows from the fact that $U_n$ is isometric and $U_n^*1_{n+1}=1_n$.
(ii)\enspace This follows from

\begin{align*}
U_n\ahat _n&=U_n\sum _{j=0}^{4^{n-1}-1}a(x_{n,j})\xhat _{n,j}=\sum _{j=0}^{4^{n-1}-1}a(x_{n,j})\sum _{k=0}^3c_{n,j}^k\xhat _{n+1,4j+k}\\
  &=\sum _{j=0}^{4^{n-1}-1}\sum _{k=0}^3a(x_{n,j})c_{n,j}^k\xhat _{n+1,4j+k}\\
  &=\sum _{j=0}^{4^{n-1}-1}\sum _{k=0}^3a(x_{n+1,4j+k})\xhat _{n+1,4j+k}=\ahat _{n+1}
\end{align*}
(iii)\enspace Applying (ii), we have that

\begin{equation*}
U_n^*\ahat _{n+1}=U_n^*U_n\ahat _n=\ahat _n\qedhere
\end{equation*}
\end{proof}

\section{Coupling Constant Symmetry} % Section 4

In Section~3 we considered 4-dimensional coupling constants $c_{n,j}^k=a(x_{n,4j+k})$, $n=1,2,\ldots$, $j=0,1,\ldots ,4^{n-1}-1$, $k=0,1,2,3$ that satisfied \eqref{eq31}. How do we know that such coupling constants exist? Certainly there exist trivial coupling constants satisfying
$c_{n,j}^k=0\hbox{ or }1$, but what about nontrivial coupling constants? One way to construct such coupling constants is by employing tensor products. This construction does not give all possible coupling constants but it does produce a large number of them and it might have physical significance.

Coupling constants can be thought of as stochastic unit vectors in $\complex ^4$ of the form

\begin{equation*}
u=(c_{n,j}^0,c_{n,j}^1,c_{n,j}^2,c_{n,j}^3)
\end{equation*}
Such vectors satisfy

\begin{equation*}
\elbows{u,1_4}=\|u\|=1
\end{equation*}
where $1_4$ is the vector $(1,1,1,1)$. Now let $v_1$ and $v_2$ be stochastic unit vectors in $\complex ^2$. (These are particular types of qubit states.) Theorem~\ref{thm11} shows that there are a continuum of such vectors. The tensor product $v_1\otimes v_2$ is a stochastic unit vector in
$\complex ^4$. Indeed, since $1_4=1_2\otimes 1_2$ we have

\begin{align*}
\elbows{v_1\otimes v_2,1_4}&=\elbows{v_1\otimes v_2,1_2\otimes 1_2}=\elbows{v_1,1_2}\elbows{v_2,1_2}=1\\
\intertext{also}
\|v_1\otimes v_2\|&=\|v_1\|\,\|v_2\|=1
\end{align*}
Now suppose that $e_{n,j}^k$, $f_{n,j}^k$, $k=0,1$, are 2-dimensional coupling constants as studied in Section~1. Employing the notation
$e_{n,j}^k=(e_{n,j}^0,e_{n,j}^1)$, $f_{n,j}^k=(f_{n,j}^0,f_{n,j}^1)$, let $c_{n,j}^k=e_{n,j}^k\otimes f_{n,j}^k$. We conclude that $c_{nj}^k$ are 4-dimensional coupling constants that we call \textit{product coupling constants}. To be explicit, we have

\begin{equation*}
c_{n,j}^k=\begin{bmatrix}\noalign{\medskip}
\begin{bmatrix}e_{n,j}^0\\ e_{n,j}^1\end{bmatrix}&f_{n,j}^0\\\noalign{\medskip}
  \begin{bmatrix}e_{n,j}^0\\ e_{n,j}^1\end{bmatrix}&f_{n,j}^1\\\noalign{\medskip}\end{bmatrix}
  =\begin{bmatrix}\noalign{\smallskip}
  e_{n,j}^0f_{n,j}^0\\\noalign{\smallskip}e_{n,j}^1f_{n,j}^0\\\noalign{\smallskip}
  e_{n,j}^0f_{n,j}^1\\\noalign{\smallskip}e_{n,j}^1f_{n,j}^1\\\noalign{\smallskip}\end{bmatrix}
\end{equation*}
Hence, $c_{n,j}^0=e_{n,j}^0f_{n,j}^0$, $c_{n,j}^1=e_{n,j}^1f_{n,j}^0$, $c_{n,j}^2=e_{n,j}^0f_{n,j}^1$, $c_{n,j}^3=e_{n,j}^1f_{n,j}^1$. This section shows that product coupling constants have particularly interesting properties.

Let $a,b$ be complex numbers satisfying $a+b=\ab{a}^2+\ab{b}^2=1$. We have seen in Theorem~\ref{thm11} that there exists a unique
$\theta\in\sqparen{0,\pi}$ such that $a=\cos\theta e^{i\theta}$. We call $\theta$ the \textit{angle for} $a$ and write $\theta =\rmang (a)$. We now ``double-down'' the pair $(a,b)$ to form the unitary matrix.

\begin{equation*}
A=\begin{bmatrix}a&b\\ b&a\end{bmatrix}
\end{equation*}
The matrix $A$ is also \textit{stochastic} in the sense that $A1_2=1_2$. It follows that $u_1=2^{-1/2}1_2$ is a unit eigenvector of $A$ with corresponding eigenvalues $1$. It is easy to check that

\begin{equation*}
u_2=2^{-1/2}=\begin{bmatrix}1\\-1\end{bmatrix}
\end{equation*}
is the other unit eigenvector of $A$ with corresponding eigenvalue

\begin{equation*}
a-b=2a-1=e^{i2\theta}
\end{equation*}
Now let $c+d=\ab{c}^2+\ab{d}^2=1$ and again form the stochastic unitary matrix

\begin{equation*}
B=\begin{bmatrix}c&d\\ d&c\end{bmatrix}
\end{equation*}
Letting $\phi=\rmang (c)$ we have the following result.

\begin{thm}       % Theorem 4.1
\label{thm41}
$A\otimes B$ is a $4\times 4$ stochastic unitary matrix with eigenvalues $1$, $e^{i2\theta}$, $e^{i2\phi}$, $e^{i2(\theta +\phi )}$ and corresponding unit eigenvectors $u_1\otimes u_1$, $u_2\otimes u_1$, $u_1\otimes u_2$, $u_2\otimes u_2$.
\end{thm}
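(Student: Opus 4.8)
The plan is to reduce every assertion about $A\otimes B$ to already-established facts about $A$ and $B$ by invoking three elementary identities for the Kronecker product: the mixed-product rule $(A\otimes B)(C\otimes D)=(AC)\otimes(BD)$, the adjoint rule $(A\otimes B)^*=A^*\otimes B^*$, and the norm rule $\|u\otimes v\|=\|u\|\,\|v\|$.

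First I would settle unitarity. Since $A$ and $B$ are each unitary, the adjoint and mixed-product rules give
$$(A\otimes B)(A\otimes B)^*=(AA^*)\otimes(BB^*)=I_2\otimes I_2=I_4,$$
so $A\otimes B$ is unitary. Next, for the stochastic property I would use $1_4=1_2\otimes 1_2$ together with $A1_2=1_2$ and $B1_2=1_2$, so that $(A\otimes B)1_4=(A1_2)\otimes(B1_2)=1_2\otimes 1_2=1_4$.

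For the spectrum I would apply the mixed-product rule to eigenvectors: if $Au=\lambda u$ and $Bv=\mu v$, then $(A\otimes B)(u\otimes v)=(Au)\otimes(Bv)=\lambda\mu\,(u\otimes v)$, so $u\otimes v$ is an eigenvector of $A\otimes B$ with eigenvalue $\lambda\mu$. Substituting the eigenpairs recorded just before the theorem --- namely $(u_1,1)$ and $(u_2,e^{i2\theta})$ for $A$, and $(u_1,1)$ and $(u_2,e^{i2\phi})$ for $B$ --- produces the four pairs $u_1\otimes u_1$, $u_2\otimes u_1$, $u_1\otimes u_2$, $u_2\otimes u_2$ with eigenvalues $1$, $e^{i2\theta}$, $e^{i2\phi}$, $e^{i2(\theta+\phi)}$, exactly in the listed order. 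The norm rule makes each of these a unit vector, and since $\brac{u_1,u_2}$ is an orthonormal basis of $\complex ^2$ the four tensor products form an orthonormal basis of $\complex ^4$, guaranteeing that no eigenvalue has been overlooked.

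There is no substantive obstacle: everything follows from standard tensor-product algebra once Theorem~\ref{thm11} and the pre-theorem eigenvalue computations for $A$ and $B$ are in hand. The only point demanding attention is bookkeeping --- pairing each listed eigenvector $u_i\otimes u_j$ with the correct product $\lambda\mu$ of one eigenvalue from $A$ and one from $B$, and checking that the four resulting orthonormal eigenvectors exhaust $\complex ^4$ so the spectrum is complete.
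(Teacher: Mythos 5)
Your proposal is correct and follows essentially the same route as the paper's own proof: unitarity and stochasticity via the mixed-product rule applied to $A\otimes B$ and to $1_4=1_2\otimes 1_2$, and the four eigenpairs obtained by tensoring the eigenvectors of $A$ and $B$. The only additions are your explicit remarks that the tensor products are unit vectors and form an orthonormal basis of $\complex^4$ (so the spectrum is exhausted), which the paper leaves implicit.
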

\begin{proof}
$A\otimes B$ is unitary because

\begin{equation*}
(A\otimes B)(A\otimes B)^*=(A\times B)(A^*\otimes B^*)=(AA^*\otimes BB^*)=I_2\otimes I_2=I_4
\end{equation*}
Also, $A\otimes B$ is stochastic because

\begin{equation*}
(A\otimes B)1_4=(A\otimes B)(1_2\otimes 1_2)=A1_2\otimes B1_2=1_2\otimes 1_2=1_4
\end{equation*}
We have seen that $A\otimes B(u_1\otimes u_1)=u_1\otimes u_1$. Moreover,

\begin{align*}
A\otimes B(u_2\otimes u_1)&=Au_2\otimes Bu_1=e^{i2\theta}u_2\otimes u_1\\
A\otimes B(u_1\otimes u_2)&=Au_1\otimes Bu_2=e^{i2\phi}u_1\otimes u_2\\
A\otimes B(u_2\otimes u_2)&=Au_2\otimes Bu_2=e^{i2(\theta +\phi)}u_2\otimes u_2\qedhere
\end{align*}
\end{proof}

Let $\atilde$ be a uta on $\pscript$ with product coupling constants $c_{n,j}^k$, $n=1,2,\ldots$, $j=0,1,\ldots ,4^{n-1}-1$, $k=0,1,2,3$. Since
$c_{n,j}^k$ are product coupling constants, for fixed $n,j$ we have that $c_{n,j}^k=e_{n,j}^k\otimes f_{n,j}^k$ for 2-dimensional coupling constants
$e_{n,j}^k$ and $f_{n,j}^k$. We have seen that

\begin{equation*}
A_j=\begin{bmatrix}e_{n,j}^0&e_{n,j}^1\\\noalign{\smallskip}e_{n,j}^1&e_{n,j}^0\end{bmatrix}\,,\quad
B_j=\begin{bmatrix}f_{n,j}^0&f_{n,j}^1\\\noalign{\smallskip}f_{n,j}^1&f_{n,j}^0\end{bmatrix}
\end{equation*}
are stochastic unitary matrices so it follows from Theorem~\ref{thm41} that $C_j=A_j\otimes B_j$ is also stochastic and unitary. From the definition of the tensor product we have that

\begin{equation}        % equation (4.1)
\label{eq41}
C_j=\begin{bmatrix}c_{n,j}^0&c_{n,j}^1&c_{n,j}^2&c_{n,j}^3\\\noalign{\smallskip}
  c_{n,j}^1&c_{n,j}^0&c_{n,j}^3&c_{n,j}^2\\\noalign{\smallskip}c_{n,j}^2&c_{n,j}^3&c_{n,j}^0&c_{n,j}^1\\\noalign{\smallskip}
  c_{n,j}^3&c_{n,j}^2&c_{n,j}^1&c_{n,j}^0\end{bmatrix}
\end{equation}
applying Theorem~\ref{thm41} again, we conclude that there exist $\theta _{n,j}^1,\theta _{n,j}^2\in\sqparen{0,\pi}$ such that the eigenvalues of $C_j$ are $1$, $e^{i2\theta _{n,j}^1}$, $e^{i2\theta _{n,j}^2}$, $e^{i2(\theta _{n,j}^1+\theta _{n,j}^2)}$ with corresponding eigenvectors $u_1\otimes u_1$,
$u_2\otimes u_1$, $u_1\otimes u_2$, $u_2\otimes u_2$.

The Hilbert space $H_{n+1}$ can be decomposed into the direct sum

\begin{equation}        % equation (4.2)
\label{eq42}
H_{n+1}=H_{n+1,0}\oplus H_{n+1,1}\oplus\cdots\oplus H_{n+1,4^{n-1}-1}
\end{equation}
where $\dim H_{n+1,j}=4$ and an orthonormal basis for $H_{n+1,j}$ is

\begin{equation}        % equation (4.3)
\label{eq43}
\brac{\xhat _{n+1,4j+k}\colon k=0,1,2,3}
\end{equation}
We now define a stochastic unitary operator $V_{n+1,j}\colon H_{n+1,j}\to H_{n+1,j}$ with matrix representation $C_j$ given by \eqref{eq41}. To be explicit, we have

\begin{equation*}
\begin{bmatrix}V_{n+1,j}(\xhat _{n+1,4j})\\V_{n+1,j}(\xhat _{n+1,4j+1})\\V_{n+1,j}(\xhat _{n+1,4j+2})\\V_{n+1,j}(\xhat _{n+1,4j+3})\end{bmatrix}
=C_j\begin{bmatrix}\xhat _{n+1,4j}\\\xhat _{n+1,4j+1}\\\xhat _{n+1,4j+2}\\\xhat _{n+1,4j+3}\end{bmatrix}
\end{equation*}
We conclude from our previous work that the eigenvalues of $V_{n+1,j}$ are 
$1$, $e^{i2\theta _{n,j}^1}$, $e^{i2\theta _{n,j}^2}$, $e^{i2(\theta _{n,j}^1+\theta _{n,j}^2)}$ with corresponding unit eigenvectors

\begin{align}        % equation (4.4)
\label{eq44}
\xhat _{n+1,j}^0&=\tfrac{1}{2}(\xhat _{n+1,4j}+\xhat _{n+1,4j+1}+\xhat _{n+1,4j+2}+\xhat _{n+1,4j+3})\notag\\
\xhat _{n+1,j}^1&=\tfrac{1}{2}(\xhat _{n+1,4j}-\xhat _{n+1,4j+1}+\xhat _{n+1,4j+2}-\xhat _{n+1,4j+3})\notag\\
\xhat _{n+1,j}^2&=\tfrac{1}{2}(\xhat _{n+1,4j}+\xhat _{n+1,4j+1}-\xhat _{n+1,4j+2}-\xhat _{n+1,4j+3})\\
\xhat _{n+1,j}^3&=\tfrac{1}{2}(\xhat _{n+1,4j}-\xhat _{n+1,4j+1}-\xhat _{n+1,4j+2}+\xhat _{n+1,4j+3})\notag
\end{align}
Finally, we define the stochastic unitary operator $V_{n+1}$ on $H_{n+1}$ by

\begin{equation*}
V_{n+1}=V_{n+1,0}\oplus V_{n+1,1}\oplus\cdots\oplus V_{n+1,4^{n-1}-1}
\end{equation*}
The eigenvalues of $V_{n+1}$ are $1$ (with multiplicity $4^{n-1}$), $e^{i2\theta _{n+1,j}^1}$, $e^{i2\theta _{n+1,j}^2}$,
$e^{i2(\theta _{n+1,j}^1+\theta _{n+1,j}^2)}$, $j=0,1,\ldots ,4^{n-1}-1$. The corresponding eigenvectors are $\xhat _{n+1,j}^k$,
$j=0,1,\ldots ,4^{n-1}-1$, $k=0,1,2,3$ given by \eqref{eq44}.

The operator $V_{n+1}$ provides an intrinsic symmetry on $H_{n+1}$ generated by the coupling constants. We call $V_{n+1}$ the
\textit{coupling constant symmetry}. Since $V_{n+1}$ is unitary, it has the form $V_{n+1}=e^{iK_{n+1}}$ where $K_{n+1}$ is a self-adjoint operator called the \textit{coupling energy}. The eigenvectors of $K_{n+1}$ are again $\xhat _{n+1,j}^k$ and the corresponding eigenvalues are
$0$ (with multiplicity $4^{n-1}$), $2\theta _{n+1,j}^1$, $2\theta _{n+1,j}^2$, $2(\theta _{n+1,j}^1+\theta _{n+1,j}^2)$. Since the eigenvalues of
$K_{n+1}$ correspond to energy values, we conclude that the physical significance of the $\theta _{n+1,j}^k$ (and $0$) are that they are one-half of energy values.

\section{Position and Momentum Observables} % Section 5

It is natural to define the \textit{position observable} $Q_n$ on $H_n$ by $Q_n\xhat _{n,j}=j\xhat _{n,j}$. Then $Q_n$ is a self-adjoint operator with eigenvectors $\xhat _{n,j}$ and corresponding eigenvalues $j$, $j=0,1,\ldots ,4^{n-1}-1$. It is also natural to define the
\textit{conjugate momentum observable} $P_n$ on $H_n$ by $P_n=V_nQ_nV_n^*$. Then $P_n$ is a self-adjoint operator with the same eigenvalues and corresponding eigenvectors $V_n\xhat _{n,j}$. Now $Q_n$ describes the total position of a causet $x_{n,j}\in\pscript _n$ but it is important to describe the coordinate observables $Q_n^k$, according to the \textit{directions} $k=0,1,2,3$.

To accomplish this, it is useful to write $j$ in its quartic representation

\begin{equation}        % equation (5.1)
\label{eq51}
j=j_{n-2}j_{n-3}\cdots j_1j_0,\quad j_i\in\brac{0,1,2,3}
\end{equation}
This representation describes the directions that a path turns when moving from $x_{1,0}$ to $x_{n,j}$. For example \eqref{eq51} represents the path that turns in direction $j_{n-2}$ at $x_{1,0}$, then turns in direction $j_{n-3},\ldots$, and finally turns in direction $j_0$ just before arriving at $x_{n,j}$. If $j$ has the form \eqref{eq51} and $k\in\brac{0,1,2,3}$ define

\begin{equation*}
j^k=j'_{n-2}j'_{n-3}\cdots j'_1j'_0
\end{equation*}
where $j'_i=1$ if $j_i=k$ and $j'_i=0$ if $j_i\ne k$. Thus, $j^k$ marks the places at which $j$ turns in direction $k$. Notice that

\begin{equation}        % equation (5.2)
\label{eq52}
j=\sum _{k=0}^3kj^k
\end{equation}
For $k=0,1,2,3$, we define the \textit{coordinate observables} $Q_n^k$ by

\begin{equation*}
Q_n^k\xhat _{n,j}=j\xhat _{n,j}
\end{equation*}
We see that $Q_n^k$ is a self-adjoint operator whose eigenvalues are the $2^{n-1}$ possible values $j_{n-1}j_{n-2}\cdots j_1j_0$ where
$j_i\in\brac{0,1}$. It follows from \eqref{eq52} that

\begin{equation}        % equation (5.3)
\label{eq53}
Q_n=\sum _{k=0}^3kQ_n^k
\end{equation}

\begin{exam}
In $H_6$, let $j=403$ whose quartic representation becomes

\begin{equation*}
j=403=256+128+16+3=1\times 4^4+2\times 4^3+1\times 4^2+0\times 4+3=12103
\end{equation*}
We then have $j^0=00010=4$, $j^1=10100=272$, $j^2=01000=64$, $j^3=1$. We then have

\begin{equation*}
0\ctimes j^0+1\ctimes j^1+2\ctimes j^2+3\ctimes j^3=272+128+3=403
\end{equation*}
The coordinate observables at $j=403$ become

\begin{equation*}
Q_6^0(\xhat _{6,j})=4\xhat _{6,j},\ Q_6^1(\xhat _{6,j})=272\xhat _{6,j},\  Q_6^2(\xhat _{6,j})=64\xhat _{6,j},\ Q_6^3(\xhat _{6,j})=\xhat _{6,j}
\end{equation*}
\end{exam}

As before, we define the \textit{conjugate momentum observables} $P_n^k$ on $H_n$ by $P_n^k=V_nQ_n^kV_n^*$, $k=0,1,2,3$. To obtain explicit forms for these operators it is convenient to consider their action on $H_{n+1,j}$ given by \eqref{eq42} and having basis \eqref{eq43}. As in Section~4 we write

\begin{align*}
Q_{n+1}^k=Q_{n+1,0}^k\oplus Q_{n+1,1}^k\oplus\cdots\oplus Q_{n+1,4^{n-1}-1}^k\\
\intertext{and}
P_{n+1}^k=P_{n+1,0}^k\oplus P_{n+1,1}^k\oplus\cdots\oplus P_{n+1,4^{n-1}-1}^k
\end{align*}
Now $x_{n,j}\to x_{n+1,4j}, x_{n+1,4j+1}, x_{n+1,4j+2}, x_{n+1,4j3}$ and in quartic notation we have $j=j_{n-2}j_{n-3}\cdots j_1j_0$ while for $k=0,1,2,3$ we have

\begin{equation*}
4j+k=j_{n-2}j_{n-3}\cdots j_1j_0k
\end{equation*}
It follows that

\begin{equation*}
(4j+\ell )^k=\begin{cases}j^k1&\hbox{if }\ell=k\\j^k0&\hbox{if }\ell\ne k\end{cases}
\end{equation*}
Hence,

\begin{equation*}
Q_{n+1}^k\xhat _{n+1,4j+\ell}=(4j+\ell )^k\xhat _{n+1,4j+\ell}
=\begin{cases}j^k1\xhat _{n+1,4j+\ell}&\hbox{if }\ell=k\\j^k0\xhat _{n+1,4j+\ell}&\hbox{if }\ell\ne k\end{cases}
\end{equation*}
Thus, in the standard basis $\brac{\xhat _{n+1,4j+k}\colon k=0,1,2,3}$ for $H_{n+1,j}$, the operator $Q_{n+1,j}^0$ has matrix form

\begin{align*}
Q_{n+1,j}^0&=
\begin{bmatrix}j^01&0&0&0\\0&j^00&0&0\\0&0&j^00&0\\0&0&0&j^00\end{bmatrix}
=(j^00)I+\begin{bmatrix}1&0&0&0\\0&0&0&0\\0&0&0&0\\0&0&0&0\end{bmatrix}\\
&=(j^00)I+\ket{\xhat _{n+1,4j}}\bra{\xhat _{n+1,4j}}
\end{align*}
Similarly, we have for $k=1,2,3$ that

\begin{equation}        % equation (5.4)
\label{eq54}
Q_{n+1,j}^k=(j^k0)I+\ket{\xhat _{n+1,4j+k}}\bra{\xhat _{n+1,4j+k}}
\end{equation}
It follows that the operators $P_{n+1,j}^k$ have the form

\begin{align}        % equation (5.5)
\label{eq55}
P_{n+1,j}^k&=V_{n+1,j}Q_{n+1,j}^kV_{n+1,j}^*\notag\\
&=(j^k0)I+\ket{V_{n+1}\xhat _{n+1,4j+k}}\bra{V_{n+1}\xhat _{n+1,4j+k}}
\end{align}
for $k=0,1,2,3$

As usual in quantum theory, it is of interest to find the commutator of $Q_{n+1,j}^k$ and $P_{n+1,j}^\ell$. This becomes

\begin{align*}
\sqbrac{Q_{n+1,j}^k,P_{n+1,j}^\ell}&=Q_{n+1,j}^kP_{n+1,j}^\ell-P_{n+1,j}^\ell Q_{n+1,j}^k\\
  &=\ket{\xhat _{n+1,4j+k}}\bra{\xhat _{n+1,4j+k}}\ket{V_{n+1}\xhat _{n+1,4j+\ell}}\bra{V_{n+1}\xhat _{n+1,4j+\ell}}\\
  &\quad -\ket{V_{n+1}\xhat _{n+1,4j+\ell}}\bra{V_{n+1}\xhat _{n+1,4j+\ell}}\ket{\xhat _{n+1,4j+k}}\bra{\xhat _{n+1,4j+k}}
\end{align*}
Letting

\begin{equation*}
R_{n+1,j}^{k,\ell}=\elbows{\xhat _{n+1,4j+k},V_{n+1}\xhat _{n+1,4j+\ell}}\ket{\xhat _{n+1,4j+k}}\bra{V_{n+1}\xhat _{n+1,4j+\ell}}
\end{equation*}
we have that

\begin{equation}        % equation (5.6)
\label{eq56}
\sqbrac{Q_{n+1,j}^k,P_{n+1,j}^\ell}=R_{n+1,j}^{k,\ell}-(R_{n+1,j}^{k,\ell})^*=\tfrac{1}{2i}\,\rmim (R_{n+1,j}^{k,\ell})
\end{equation}
We can apply \eqref{eq41} and \eqref{eq56} to find explicit expressions for the commutators. For example

\begin{align*}
R_{m+1.k}^{0,0}&=\elbows{\xhat _{n+1,4j},V_{n+1}\xhat _{n+1,4j}}\ket{\xhat _{n+1,4j}}\bra{V_{n+1}\xhat _{n+1,4j}}\\
  &=c_{n,j}^0\sqbrac{\ket{\xhat _{n+1,4j}}\bra{\sum _{k=0}^3c_{n,j}^k\xhat _{n+1,4j+k}}}\\\noalign{\smallskip}
  &=c_{n,j}^0\sqbrac{\sum _{k=0}^3\cbar _{n,j}^k\ket{\xhat _{n+1,4j}}\bra{\xhat _{n+1,4j+k}}}
\end{align*}
The matrix representation for $R_{n+1,j}^{0,0}$ is

\begin{equation*}
R_{n+1,j}^{0,0}=c_{n,j}^0
\begin{bmatrix}\cbar_{n,j}^0&\cbar_{n,j}^1&\cbar_{n,j}^2&\cbar_{n,j}^3\\0&0&0&0\\0&0&0&0\\0&0&0&0\end{bmatrix}
\end{equation*}
We conclude that

\begin{align}        % equation (5.7)
\label{eq57}
\sqbrac{Q_{n+1,j}^0,P_{n+1,j}^0}&=R_{n+1,j}^{0,0}-(R_{n+1,j}^{0,0})^*\notag\\\noalign{\smallskip}
&=\begin{bmatrix}0&c_{n,j}^0\cbar_{n,j}^1&c_{n,j}^0\cbar_{n,j}^2&c_{n,j}^0\cbar_{n,j}^3\\\noalign{\smallskip}
-\cbar _{n,j}^0c_{n,j}^1&0&0&0\\\noalign{\smallskip}-\cbar _{n,j}^0c_{n,j}^2&0&0&0\\\noalign{\smallskip}
-\cbar _{n,j}^0c_{n,j}^3&0&0&0\end{bmatrix}
\end{align}
As another example we have

\begin{align*}
\sqbrac{Q_{n+1,j}^1,P_{n+1,j}^0}&=R_{n+1,j}^{1,0}-(R_{n+1,j}^{1,0})^*\notag\\\noalign{\smallskip}
&=\begin{bmatrix}0&-\cbar_{n,j}^1c_{n,j}^0&0&0\\\noalign{\smallskip}
  c_{n,j}^1\cbar _{n,j}^0&0&c_{n,j}^1\cbar _{n,j}^2&c_{n,j}^1\cbar _{n,j}^3\\\noalign{\smallskip}
  0&-\cbar _{n,j}^1c_{n,j}^2&0&0\\\noalign{\smallskip}0&-\cbar _{n,j}^1c_{n,j}^3&0&0\end{bmatrix}
\end{align*}
The other commutators are similar and the full operators are given by

\begin{equation*}
\sqbrac{Q_{n+1}^k,P_{n+1}^\ell}=\sqbrac{Q_{n+1,0}^k,P_{n+1,0}^\ell}\oplus\cdots\oplus\sqbrac{Q_{n+1,4^{n-1}-1}^k,P_{n+1,4^{n-1}-1}^\ell}
\end{equation*}
Any one of these commutators $A$ is anti-self-adjoint in the sense that $A^*=-A$. Thus they have a complete set of eigenvectors and corresponding eigenvalues. The next result considers the case of $\sqbrac{Q_{n+1,j}^0,P_{n+1,j}^0}$ and the others are similar. We assume that $c_{n,j}^0\ne 0$ because otherwise the result is trivial.

\begin{thm}       % Theorem 5.1
\label{thm51}
The eigenvalues of $\sqbrac{Q_{n+1,j}^0,P_{n+1,j}^0}$ are $0$ (with multiplicity 2) and\newline
$\lambda _\pm =\pm i\ab{c_{n,j}^0}\sqbrac{1-\ab{c_{n,j}^0}^2}^{1/2}$ with corresponding eigenvectors

\begin{equation*}
v_r=(0,a_r,b_r,c_r),\quad r=1,2
\end{equation*}
where $(a_r,b_r,c_r)\perp (\cbar _{n,j}^1,\cbar _{n,j}^2,\cbar _{n,j}^3)$, $r=1,2$ with $v_1\perp v_2$, $v_r\ne 0$, and

\begin{equation*}
v_\pm =(\lambda _\pm ,-\cbar _{n,j}^0c_{n,j}^1,-\cbar _{n,j}^0c_{n,j}^2,-\cbar _{n,j}^0c_{n,j}^3)
\end{equation*}
\end{thm}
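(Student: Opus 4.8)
The plan is to work directly with the explicit $4\times4$ matrix for $\sqbrac{Q_{n+1,j}^0,P_{n+1,j}^0}$ already recorded in \eqref{eq57}. Abbreviating $a=c_{n,j}^0$, this matrix $M$ is a bordered matrix in which only the first row and first column are nonzero: the first row is $(0,a\cbar_{n,j}^1,a\cbar_{n,j}^2,a\cbar_{n,j}^3)$ and the first column below the corner is $(-\abar c_{n,j}^1,-\abar c_{n,j}^2,-\abar c_{n,j}^3)$. First I would record that $M^*=-M$, so $M$ is anti-self-adjoint; hence it is diagonalizable with purely imaginary eigenvalues and admits a complete orthogonal eigenbasis, which guarantees exactly four eigenvalues counted with multiplicity.

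Next I would write the eigenvalue equation $Mv=\lambda v$ for $v=(v_0,v_1,v_2,v_3)$ componentwise. The last three rows give $-\abar c_{n,j}^k v_0=\lambda v_k$ for $k=1,2,3$, while the first row gives $a\sum_{k=1}^3\cbar_{n,j}^k v_k=\lambda v_0$. For $\lambda=0$ these force $v_0=0$ (using $c_{n,j}^0\ne0$ and that not all of $c_{n,j}^1,c_{n,j}^2,c_{n,j}^3$ vanish) together with the single scalar constraint $\sum_{k=1}^3\cbar_{n,j}^k v_k=0$. This constraint carves out a two-dimensional subspace of the $(v_1,v_2,v_3)$-space, namely the orthogonal complement of $(\cbar_{n,j}^1,\cbar_{n,j}^2,\cbar_{n,j}^3)$ in the bilinear sense used in the statement, so I may choose two orthogonal nonzero vectors $v_1,v_2$ of the stated form $(0,a_r,b_r,c_r)$; this produces eigenvalue $0$ with multiplicity two.

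For the remaining eigenvalues I would take $\lambda\ne0$, whence $v_0\ne0$; solving the last three equations yields $v_k=-\abar c_{n,j}^k v_0/\lambda$, and substituting into the first equation collapses everything to the scalar relation $\lambda^2=-\ab{a}^2\paren{\ab{c_{n,j}^1}^2+\ab{c_{n,j}^2}^2+\ab{c_{n,j}^3}^2}$. Here I would invoke the uta normalization $\sum_{k=0}^3\ab{c_{n,j}^k}^2=1$, which rewrites the bracketed sum as $1-\ab{c_{n,j}^0}^2$ and hence gives $\lambda_\pm=\pm i\ab{c_{n,j}^0}\sqbrac{1-\ab{c_{n,j}^0}^2}^{1/2}$. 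Fixing the free scale by $v_0=\lambda_\pm$ then reproduces the displayed eigenvector $v_\pm=(\lambda_\pm,-\cbar_{n,j}^0c_{n,j}^1,-\cbar_{n,j}^0c_{n,j}^2,-\cbar_{n,j}^0c_{n,j}^3)$.

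The only genuinely delicate points are bookkeeping: confirming that the two regimes $\lambda=0$ and $\lambda\ne0$ together exhaust all four eigenvalues with the correct multiplicities, and being careful about the (non-conjugating) orthogonality convention in the $\lambda=0$ eigenspace. A clean way to settle the multiplicity question at once is to compute $M^2$, which is block-diagonal with a scalar block $-\ab{a}^2\paren{\ab{c_{n,j}^1}^2+\ab{c_{n,j}^2}^2+\ab{c_{n,j}^3}^2}$ and a $3\times3$ rank-one block $-\ab{a}^2\ket{\,\cdot\,}\bra{\,\cdot\,}$ built from $(c_{n,j}^1,c_{n,j}^2,c_{n,j}^3)$; the eigenvalues of $M$ are the square roots of those of $M^2$, giving $0$ twice and $\lambda_\pm$ once each. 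I expect the normalization substitution to be the crux, since that is exactly the step converting the raw relation for $\lambda^2$ into the closed form quoted in the statement.
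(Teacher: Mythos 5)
Your proposal is correct and rests on the same ingredients as the paper's own proof: the explicit matrix \eqref{eq57} together with the normalization $\sum_{k=0}^3\ab{c_{n,j}^k}^2=1$. The only difference is one of organization --- the paper merely verifies that the four stated vectors are eigenvectors with the stated eigenvalues, whereas you solve $Mv=\lambda v$ from scratch (and cross-check multiplicities via $M^2$ and anti-self-adjointness), which makes the claim that these exhaust the spectrum, with the correct multiplicities, slightly more airtight.
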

\begin{proof}
From its form in \eqref{eq57} it is clear that $\sqbrac{Q_{n+1,j}^0,P_{n+1,j}^0}v_r=0$, $=1,2$. Moreover, from \eqref{eq57} we have

\begin{align*}
\sqbrac{Q_{n+1,j}^1,P_{n+1,j}^0}v_\pm&=
 \begin{bmatrix}-\ab{c_{n,j}^0}^2(1-\ab{c_{n,j}^0}^2)\\\noalign{\smallskip}
  \lambda _\pm (-\cbar _{n,j}^0c_{n,j}^1)\\\noalign{\smallskip}
  \lambda _\pm (-\cbar _{n,j}^0c_{n,j}^2)\\\noalign{\smallskip}
  \lambda _\pm (-\cbar _{n,j}^0c_{n,j}^3)\end{bmatrix}\\\noalign{\smallskip}
   &=\lambda _\pm\begin{bmatrix}-\ab{c_{n,j}^0}^2(1-\ab{c_{n,j}^0}^2)/\lambda _\pm\\\noalign{\smallskip}
 -\cbar _{n,j}^0c_{n,j}^1\\\noalign{\smallskip} -\cbar _{n,j}^0c_{n,j}^2\\\noalign{\smallskip}
 -\cbar _{n,j}^0c_{n,j}^3\end{bmatrix}\\
 &=\lambda _\pm v_\pm\qedhere
\end{align*}
\end{proof}
Of course, we can now apply Theorem~\ref{thm51} to find the eigenvalues and eigenvectors of $\sqbrac{Q_{n+1}^0,P_{n+1}^0}$.

\section{Expectations} % Section 6

In this section we compute expectations of observables considered in Section~5. These expectations will be computed relative to the amplitude state
$\ahat _n$ of Equation~\eqref{eq34}. We denote the expectation of an operator $A$ in the state $\ahat _n$ by

\begin{equation*}
E_n(A)=\elbows{\ahat _n,A\ahat _n}
\end{equation*}
For the position observable $Q_n$, we have that

\begin{align*}
E_n(Q_n)&=\elbows{\ahat _n,Q_n\ahat _n}=\sum _{r=0}^{4^{n-1}-1}\sum _{s=0}^{4^{n-1}-1}\abar (x_{n,r})sa(x_{n,s})\delta _{r,s}\\
  &=\sum _{j=0}^{4^{n-1}-1}j\ab{a(x_{n,j})}^2
\end{align*}
which is not at all surprising. In a similar way, we obtain

\begin{equation*}
E_n(Q_n^k)=\sum _{j=0}^{4^{n-1}-1}j^k\ab{a(x_{n,j})}^2
\end{equation*}

To describe momentum operators, it is again convenient to consider $H_{n+1}$ with the amplitude state

\begin{equation*}
\ahat _{n+1}=\sum _{k=0}^3\sum _{j=0}^{4^{n-1}-1}a(x_{n+1,4j+k})\xhat _{n+1,4j+k}
\end{equation*}
As before, we first consider $P_{n+1,j}^k$ on $H_{n+1,j}$ and define

\begin{equation*}
\ahat _{n+1,j}=\sum _{k=0}^3a(x_{n+1,4j+k})\xhat _{n+1,4j+k}
\end{equation*}
By \eqref{eq55} we have that

\begin{align*}
E_{n+1,j}(P_{n+1,j}^k)&=\elbows{\ahat _{n+1,j},P_{n+1,j}\ahat _{n+1,j}}\\
  &=(j^k0)\|\ahat _{n+1,j}\|^2+\ab{\elbows{V_{n+1}\xhat _{n+1,4j+k},\ahat _{n+1,j}}}^2
\end{align*}

In particular,

\begin{align*}
E_{n+1,j}(P_{n+1,j}^0&=(j^00)\|\ahat _{n+1,j}\|^2+\ab{\elbows{V_{n+1}\xhat _{n+1,4j},\ahat _{n+1,j}}}^2\\
  &=(j^00)\|\ahat _{n+1,j}\|^2+\ab{\elbows{\sum _{k=0}^3c_{n,j}^k\xhat _{n+1,4j+k},\ahat _{n+1,j}}}^2\\
  &=(j^00)\|\ahat _{n+1,j}\|^2+\ab{\sum _{k=0}^3\cbar _{n,j}^ka(x_{n+1,4j+k})}^2\\
  &=(j^00)\|\ahat _{n+1,j}\|^2+\ab{\sum _{k+0}^3\cbar _{n,j}^kc_{n,j}^ka(x_{n,j})}^2\\
  &=(j^00)\|\ahat _{n+1,}\|^2+\ab{a(x_{n,j})}^2\\
  &=\sqbrac{(j^00)+1}\ab{a(x_{n,j})}^2
\end{align*}
We conclude that

\begin{equation*}
E_{n+1}(P_{n+1}^0)=\sum _{j=0}^{4^{n-1}-1}(j^00)\ab{a(x_{n,j})}^2+1
\end{equation*}

Surprisingly, we obtain a slightly different expression for $E_{n+1,j}(P_{n+1}^1)$. In particular,

\begin{equation*}
E_{n+1,j}(P_{n+1,j}^1)=(j^10)\|\ahat _{n+1,j}\|^2+\ab{\elbows{V_{n+1}\xhat _{n+1,4j+1},\ahat _{n+1,j}}}^2
\end{equation*}
But

\begin{equation*}
V_{n+1}\xhat _{n+1,4j+1}=c_{n,j}^1\xhat _{n+1,4j}+c_{n,j}^0\xhat _{n+1,4j+1}+c_{n,j}^3\xhat _{n+1,4j+2}+c^2_{n,j}x_{n+1,4j+3}
\end{equation*}
It follows that

\begin{align*}
\elbows{V_{n+1}\xhat _{n+1,4j+1},\ahat _{n+1,j}}&=\cbar_{n,j}^1a(x_{n+1,4j})+\cbar _{n,j}^0a(x_{n+1,4j+1})+\cbar _{n,j}^3a(x_{n+1,4j+2})\\
  &\quad +\cbar _{n,j}^2a(x_{n+1,4j+3})\\
  &=(\cbar _{n,j}^1c_{n,j}^0+\cbar _{n,j}^0c_{n,j}^1+\cbar _{n,j}^3c_{n,j}^2+\cbar _{n,j}^2c_{n,j}^3)a(x_{n,j})\\
  &=0
\end{align*}
Hence,

\begin{equation*}
E_{n+1,j}(P_{n+1,j}^1)=(j^10)\ab{a(x_{n,j})}^2
\end{equation*}
so that

\begin{equation*}
E_{n+1}(P_{n+1}^1)=\sum _{j=0}^{4^{n-1}-1}(j^10)\ab{a(x_{n,j})}^2
\end{equation*}
In a similar way we have

\begin{equation*}
E_{n+1}(P_{n+1}^k)=\sum _{j=0}^{4^{n-1}-1}(j^k0)\ab{a(x_{n,j})}^2
\end{equation*}
for $k=2,3$.

Even though $\sqbrac{Q_{n+1}^k,P_{n+1}^\ell}$ is not self-adjoint, the operator $i\sqbrac{Q_{n+1}^k,P_{n+1}^\ell}$ is self-adoint and it is of interest to find its expectation. We begin with

\begin{equation*}
E_{n+1,j}\brac{\sqbrac{Q_{n+1,j}^0,P_{n+1,j}^0}}=\elbows{\ahat _{n+1,j},\sqbrac{Q_{n+1,j}^0,P_{n+1,j}^0}\ahat _{n+1,j}}
\end{equation*}
Applying \eqref{eq57} we have that

\begin{align*}
\sqbrac{Q_{n+1,j}^0,P_{n+1,j}^0}\ahat _{n+1,j}&=
 \begin{bmatrix}c_{n,j}^0\sum _{k=1}^3\cbar _{n,j}^ka(x_{n+1,4j+k})\\\noalign{\smallskip}
  -\cbar _{n,j}^0c_{n,j}^1a(x_{n+1,4j})\\\noalign{\smallskip}
 -\cbar _{n,j}^0c_{n,j}^2a(x_{n+1,4j})\\\noalign{\smallskip}
  -\cbar _{n,j}^0c_{n,j}^3a(x_{n+1,4j})\end{bmatrix}\\\noalign{\smallskip}
   &=a(x_{n,j})\begin{bmatrix}c_{n,j}^0\sum _{k=1}^3\ab{c_{n,j}^k}^2\\\noalign{\smallskip}
 -\ab{c_{n,j}^0}^2c_{n,j}^1\\\noalign{\smallskip} -\ab{c_{n,j}^0}^2c_{n,j}^2\\\noalign{\smallskip}
 -\ab{c_{n,j}^0}^2c_{n,j}^3\end{bmatrix}\\
\end{align*}
We conclude that

\begin{align*}
E_{n+1,j}&\brac{\sqbrac{Q_{n+1,j}^0P_{n+1,j}^0}}\\
  &=\ab{a(x_{n,j})}^2\sqbrac{\ab{c_{n,j}^0}^2\sum _{k=1}^3\ab{c_{n,j}^k}^2-\ab{c_{n,j}^0}^2\ab{c_{n,j}^1}^2
  -\ab{c_{n,j}^0}^2\ab{c_{n,j}^2}^2-\ab{c_{n,j}^0}^2\ab{c_{n,j}^3}^2}\\
  &=0
\end{align*}
It follows that $E_{n+1}\brac{\sqbrac{Q_{n+1}^0,P_{n+1}^0}}=0$. Hence, for the amplitude state there is no lower bound for the product of the variances of $Q_{n+1}^0$ and $P_{n+1}^0$ as in the Heisenberg uncertainty relation. In a similar way, one can show that

\begin{equation*}
E_{n+1}\brac{\sqbrac{Q_{n+1}^k,P_{n+1}^\ell}}=0
\end{equation*}
for $k,\ell =0,1,2,3$.

%=================================================================
% References: Variant A
%=================================================================
% Back Matter (References and Notes)
%----------------------------------------------------------
% Style and layout of the references
\bibliographystyle{mdpi}
\makeatletter
\renewcommand\@biblabel[1]{#1. }
\makeatother

%=================================================================
% References:  Variant B
%=================================================================
% Use the following option to include external BibTeX files:
%\bibliography{lite}
%\bibliographystyle{mdpi}

%%%%%%%%%%%%%%%%%%%%%%%%%%%%%%%%%%%%%%%%%%

%\abbreviations{Abbreviations/Nomenclature}
%
%Main text.

%%%%%%%%%%%%%%%%%%%%%%%%%%%%%%%%%%%%%%%%%%

%\appendix
%\section{Appendix Title}
%
%Main text.

\end{document}